\title{Quantum Alternation: Prospects and Problems}
\author{Costin B\u{a}descu
\institute{McGill University\\ Montr\'eal, Canada}
\email{cbades@cs.mcgill.ca}
\and
Prakash Panangaden
\institute{McGill University\\ Montr\'eal, Canada}
\email{prakash@cs.mcgill.ca}
}
\begin{document}
\maketitle

\begin{abstract}
  We propose a notion of quantum control in a quantum programming language
  which permits the superposition of finitely many quantum operations
  without performing a measurement.  This notion takes the form of a
  conditional construct similar to the \texttt{if} statement in classical
  programming languages. We show that adding such a quantum \texttt{if}
  statement to the QPL programming language~\cite{Selinger04} simplifies
  the presentation of several quantum algorithms.  This motivates the
  possibility of extending the denotational semantics of QPL to include
  this form of quantum alternation.  We give a denotational semantics for
  this extension of QPL based on Kraus decompositions rather than on
  superoperators.  Finally, we clarify the relation between quantum
  alternation and recursion, and discuss the possibility of lifting the
  semantics defined by Kraus operators to the superoperator semantics
  defined by Selinger~\cite{Selinger04}.
\end{abstract}

\section{Introduction}
The field of quantum programming languages emerged in the early 2000s as
a result of researchers' interest in understanding quantum algorithms
structurally.  This interest is backed by the belief that a structural
study of quantum algorithms may have the same positive effect on our
understanding of quantum computing as the introduction of structured
programming had on classical computation.  This endeavor has two clear
objectives: understanding how fundamental quantum resources such as
quantum parallelism and entanglement fit into the theory of computation,
and exploiting these resources to aid in designing new quantum
algorithms which can outperform the existing classical ones.

Conforming to this structural approach, the present work casts quantum
parallelism as a resource which can be used to determine the control flow
of a program.  This flow is usually built up by composing three primitive
operations: sequencing, branching, and recursion.  Of these three, branching
is the only operation which depends on data supplied to the program.  In
quantum computing, this data can be a qubit whose state is unknown.  In
this case, a measurement is normally used to extract a Boolean value from
the qubit and the transition to the next state depends on the measurement
outcome.  This procedure is similar to sampling a Bernoulli random variable
where the distribution is determined by the state of the qubit.  Hence, the
form of quantum control implemented by measurements is of a probabilistic
nature.  A natural question to ask is whether there is a sensible notion of
branching in a quantum programming language which operates at the quantum
level, that is, without interference from the environment.  This speculative
type of branching is henceforth referred to as \emph{quantum alternation}
or \emph{quantum control}.  Investigating the viability of this concept is
the main theme of this paper.

The idea of quantum control is not new.  Indeed, in a quantum Turing
machine~\cite{Deutsch85} -- the first formalism of quantum computation
-- the flow of execution is described by a constant unitary operator.
Thus, both data and control may be ``quantum.''  Nevertheless, the
passage from the quantum control mechanism present in a quantum Turing
machine to a structural notion of quantum branching in a programming
language is not clear.  The first programming language designed to
support quantum control was defined by Altenkirch and Grattage
in~\cite{Altenkirch05}.  The language, called QML, provides a
\texttt{case} statement which allows superposing several quantum
operations without performing a measurement.  However, the \texttt{case}
statement can only be used in certain situations specified by the
introduction rules of the type system which use an ``orthogonality''
judgement.  A more recent work on quantum alternation is~\cite{Ying14}
where the authors propose a language called QGCL (after Dijkstra's
Guarded Command Language) to support the paradigm of ``superposition of
programs.'' QGCL bases the definition of quantum control on the analogy
with quantum random walks and introduces an auxiliary system of
``quantum coins'' which is used to perform branching.  A more detailed
discussion of both of these works and their relation to the work
presented in this paper is deferred to the section on related work.  For
the moment, we note that there are many similarities and a few
differences between our work and the work reported in~\cite{Ying14}.

We proceed to outline the basic properties that quantum alternation
should possess.  The notation used in the sequel follows the usual
mathematical framework for open quantum systems: states are represented
by density operators on some Hilbert space, and quantum operations are
given by superoperators, i.e.  completely positive (CP)
trace-nonincreasing maps.  All Hilbert spaces are assumed to be
finite-dimensional, unless otherwise stated.  If $\cH$ is a Hilbert
space, we denote by $S(\cH)$ the set of states on $\cH$.  Thus, a
superoperator is a linear map $T : S(\cH) \to S(\cK)$.  The dynamics
defined by a superoperator $T : S(\cH) \to S(\cH)$ is said to be
\emph{reversible} if $T$ can be represented as a pure unitary operation,
viz.  $T(\rho) = U \rho U^\dagger$ for some unitary operator
$U : \cH \to \cH$.  $\qbit$ is defined to be the $2$-dimensional Hilbert
space $\CC^2$ with the computational basis $\ket{0}$ and $\ket{1}$. A
qubit is a term $q$ of type $\qbit$, denoted $q \tp \qbit$. We define
the classical states $\Pi_0 = \ketbra{0}{0}$ and $\Pi_1 = \ketbra{1}{1}$
corresponding to the elements of the computational basis.

We posit the following typing judgement for quantum alternation.  Given
a qubit $q \tp \qbit$ and two superoperators
$T_0, T_1 : S(\cH) \to S(\cK)$, the alternation of $T_0$ and $T_1$ with
respect to $q$ should be a superoperator
$\Alt_q(T_0, T_1) : S(\qbit \tensor \cH) \to S(\qbit \tensor
\cK)$. Thus,
\begin{enumerate}[label=\Roman*.]
\item Quantum alternation has the following typing judgement, where
  $\Pi$ is a procedure context and $\Gamma$ and $\Gamma'$ are typing
  contexts:
\begin{center}
  \begin{tabular}{c}
    \infer{\Pi \vdash \langle q \tp \qbit, \Gamma \rangle \,
    \ifthenelse{q}{P}{Q} \, 
    \langle q \tp \qbit, \Gamma' \rangle}{
    \Pi \vdash \langle \Gamma \rangle \, P \, \langle \Gamma' \rangle
    \qquad \Pi \vdash \langle \Gamma \rangle \, Q \, \langle \Gamma' \rangle}
  \end{tabular}
\end{center}
\end{enumerate}
Note that, according to the typing judgement, the branches $P$ and $Q$
cannot access the qubit $q$. There are at least two reasons for this
particular choice.  Firstly, we will require that the alternation of $P$
and $Q$ with respect to $q$ is a reversible operation if $P$ and $Q$ are
reversible, which is not necessarily the case if $P$ and $Q$ are allowed
access to $q$.  Secondly, $q$ is a resource used to superpose different
statements and, as with any type of resource, it should be in some sense
consumed.  This situation is not unlike the case of measurement where
the state of the qubit collapses to the classical state observed.  The
difference here is that quantum branching does not extract any classical
information from $q$, so the qubit does not collapse to a classical
state.

The second fundamental property required of quantum alternation is that
it should use the information encoded in the classical states of
$q$.  That is, the alternation should depend on a specific choice of
basis for $q$ and each branch must correspond to a distinct basis
vector.  The state of $q$ should affect the superposition of quantum
operations:
\begin{enumerate}[label=\Roman*., resume]
\item If the qubit $q$ is in a classical state $\Pi_i$ with
  $i \in \Bin$, then $\Alt_q(T_0, T_1) = I \tensor T_i$, i.e.  the
  alternation reduces to a local operation $T_i$ on $S(\cH)$.
\end{enumerate}
The second condition formalizes the intuition of classical alternation
in this context.  Since $\Alt_q(T_0, T_1)$ is a linear map, it follows
that if $\rho$ is a state on $\qbit \tensor \cH$ then
\begin{equation*}
  \Alt_q(T_0, T_1) :: \rho =
  \begin{bmatrix}
    A & B \\
    C & D
  \end{bmatrix}
  \mapsto
  \begin{bmatrix}
    T_0 A & \ast \\
    \ast & T_1 D
  \end{bmatrix}.
\end{equation*}
The off-diagonal asterisks represent entries which are not yet
determined by anything other than the blocks on the diagonal and the
condition that the result must be a positive operator.  If these entries
are null, then $\Alt_q(T_0, T_1)$ can be implemented by a measurement
followed by merging.  Hence, it is necessary to impose additional
constraints to obtain a notion of branching which may be called
``quantum.'' The final condition we impose, concerning the reversibility
of alternation, addresses this issue:
\begin{enumerate}[label=\Roman*., resume]
\item If $T_0$ and $T_1$ are reversible, then $\Alt_q(T_0, T_1)$ is
  reversible.
\end{enumerate}
The dynamics of a closed quantum-mechanical system is reversible, so
this requirement is natural, if not compulsory, for any definition of
quantum alternation.  The reversibility condition also ensures that the
implementation of alternation cannot be based on measurement.

Following the conditions introduced above, we can suggest a definition
of quantum alternation in a \emph{closed} quantum system:

Let $\cH$ be a Hilbert space and let $U_0, U_1 : \cH \to \cH$ be unitary
operators.  Given a qubit $q \tp \qbit$, define the alternation
$\Alt_q(U_0, U_1)$ with respect to $q$ by
\begin{equation}\label{eq:unitaryif}
  \Alt_q(U_0, U_1) = \Pi_0 \tensor U_0 + \Pi_1 \tensor U_1.
\end{equation}
This definition of $\Alt$ meets all three conditions and generalizes
immediately to a definition of quantum alternation controlled by a
system of multiple qubits.  Let $\qbit^n$ be the $n$fold tensor product
of $\qbit$ with itself and set $\ell = 2^n - 1$.  Let
$\Pi_0, \ldots, \Pi_\ell$ be the classical states of $\qbit^n$.  Given
$\bar{q} \tp \qbit^n$, the alternation of unitary operators
$U_0, \ldots, U_\ell : \cH \to \cH$ with respect to $\bar{q}$ is defined
by
\begin{equation}\label{eq:unitarycase}
  \Alt_{\bar{q}}(U_0, \ldots, U_\ell)
  = \sum_{k=0}^\ell \Pi_k \tensor U_k.
\end{equation}
This form of alternation corresponds to a quantum \texttt{case}
statement.  As we will see, the Deutsch--Jozsa algorithm can be obtained
from Deutsch's algorithm essentially by replacing an \texttt{if}
statement with a \texttt{case} statement.

(\ref{eq:unitarycase}) is a special case of a \emph{measuring operator}
\cite{Kitaev02}.  In the definition of a measuring operator, the classical
states $\Pi_k$ can be replaced by projections onto pairwise orthogonal
subspaces.  Thus, it is possible to consider a slightly more general
notion of quantum alternation where the superposition is controlled by a
set of pairwise orthogonal projections rather than by a system of
qubits; this idea is also introduced in~\cite{Ying14}.

The problem of defining quantum alternation in QPL amounts to finding an
appropriate extension of the definition given above to open quantum
systems which is structural, compositional, and satisfies the three
aforementioned criteria.

\section{Examples}

Prior to defining a semantics for quantum control in open quantum
systems, we present a few examples of QPL programs which make use of
quantum alternation in a closed system.  Thus, all quantum operations
considered in this section are pure operations associated with a
specific \emph{unitary} operator defined within the program.

We briefly review the fragment of QPL which will be used in this
paper.  The state of a QPL program is a density matrix and a statement is
interpreted as a superoperator.  The primitives we will use are as
follows: $\qplskip$ is the identity superoperator; $\bar{q} \unitary U$
applies the unitary transformation $U$ to the tuple of qubits $\bar{q}$;
$\newqbit{q}$ allocates a new qubit register named $q$ initialized to
$\ket{0}$; $\measurethenelse{q}{P}{Q}$ measures the qubit register $q$
and evaluates $P$ or $Q$ accordingly; $\discard{q}$ represents the
partial trace over the component of the state space represented by $q$.

We will make use of two additional constructs to illustrate quantum
alternation: an $\ifthenelse{q}{P}{Q}$ statement interpreted as the
superoperator defined by (\ref{eq:unitaryif}), and a
$\caseof{\bar{q}}{\Pi_k \to P_k}$ statement interpreted as the
superoperator defined by (\ref{eq:unitarycase}). Note that all branches
of an alternation (e.g.  $P$, $Q$, etc.) are assumed to be pure unitary
operations.

The simplest example using quantum alternation is the construction of
controlled unitary operators.  If $U$ is a unitary operator and
$q_0, q_1 \tp \qbit$ are two qubits, then
\begin{equation*}
  \ifthenelse{q_0}{\qplskip}{q_1 \unitary U}
\end{equation*}
implements a controlled-$U$ operation.  Thus, if $N$ is the
$\mathsf{NOT}$ gate, two nested \texttt{if} statements can be used to
implement the Toffoli gate:
\begin{align*}
  \ifthenelse{q_0}{\qplskip}{\ifthenelse{q_1}{\qplskip}{q_2 \unitary N}}
\end{align*}

Implementing a controlled gate using an \texttt{if} statement allows for
a more succint presentation of quantum circuits in QPL.  For instance,
given qubits $q_1, \ldots, q_n \tp \qbit$, the following program
implements an efficient circuit for the quantum Fourier transform
(cf. \cite[p.  219]{Nielsen00}):
\begin{align*}
  &\fortodo{i = 1}{n} \\
  &\quad q_i \unitary H \\
  &\quad \fortodo{k = 2}{n - i + 1} \\
  &\quad \quad \ifthenelse{q_{k+i-1}}{\qplskip}{q_i \unitary R_k}
\end{align*}
Here $R_k$ is the phase shift gate defined by
$R_k = \Pi_0 + e^{i \theta} \Pi_1$ with $\theta = 2 \pi / 2^k$.

A more important example, exhibiting the relation between quantum
parallelism and quantum alternation, is an implementation of
\emph{Deutsch's algorithm}~\cite{Deutsch85}. The problem is to determine
whether a given Boolean function $f: \Bin \to \Bin$ is constant.

For each $x \in \Bin$, let $U_x : \qbit \to \qbit$ be the permutation
operator transposing $\ket{0}$ with $\ket{f(x)}$ and fixing the rest of
the basis.  Let $x \xor y$ denote the \emph{exclusive or} of bits $x$ and
$y$.  Note that $0 \xor x = x$ and $1 \xor x = \lnot x$ for all
$x \in \Bin$.  Thus, $U_x \ket{y} = \ket{y \xor f(x)}$ for
$x, y \in \Bin$.  Given qubits $q_0, q_1 \tp \qbit$, consider the
statement:
\begin{equation*}
  \ifthenelse{q_0}{q_1 \unitary U_0}{q_1 \unitary U_1}
\end{equation*}
Using definition (\ref{eq:unitaryif}), this statement is interpreted as
the pure operation defined by the unitary:
\begin{equation*}
  U_f \dblcolon
  \ket{0} \tensor \psi_0 + \ket{1} \tensor \psi_1 
  \;\mapsto\; \ket{0} \tensor U_0\psi_0 + \ket{1} \tensor U_1\psi_1.
\end{equation*}
A simple calculation shows that $U_f$ can also be defined by the map
$\ket{x, y} \mapsto \ket{x, y \xor f(x)}$.  Therefore, Deutsch's
algorithm can be implemented as follows.
\begin{align*}
  &\newqbit{q_0, q_1} \\
  &q_0 \unitary H \\
  &q_1 \unitary H \circ N \\
  &\ifthenelse{q_0}{q_1 \unitary U_0}{q_1 \unitary U_1} \\
  &q_0 \unitary H
\end{align*}

The algorithm above can be modified to take as input a general Boolean
function $f : \Bin[n] \to \Bin$.  A map such as $f$ is said to be
\emph{balanced} if $\Pr [f(x) = 1] = \frac{1}{2}$ for a uniformly random
$x \in \Bin[n]$.  The \emph{Deutsch--Jozsa algorithm}~\cite{Deutsch92}, a
generalization of Deutsch's algorithm, determines whether a given Boolean
function $f : \Bin[n] \to \Bin$ is constant or not contingent upon the
assumption that $f$ either constant or balanced.  An implementation of this
algorithm is obtained essentially by replacing the \texttt{if} statement
above with a \texttt{case} statement.  Indeed, for each $x \in \Bin[n]$,
let $U_x$ be the permutation operator transposing $\ket{0}$ with
$\ket{f(x)}$ and fixing the rest of the basis.  Suppose
$\bar{q}_0 \tp \qbit^n$ and $q_1 \tp \qbit$ are given.  The statement
\begin{equation}\label{eq:parallelcase}
  \caseof{\bar{q}_0}{\ket{x} \to q_1 \unitary U_x}
\end{equation}
implements the unitary
$\tilde{U}_f \dblcolon \ket{x, y} \mapsto \ket{x, y \xor f(x)}$ with
$x \in \Bin[n]$.  Hence, the Deutsch--Jozsa algorithm can be written as:
\begin{align*}
  &\newqbitn{\bar{q}_0} \\
  &\newqbit{q_1} \\
  &\bar{q}_0 \unitary H^{\tensor n} \\
  &q_1 \unitary H \circ N \\
  &\caseof{\bar{q}_0}{\ket{x} \to q_1 \unitary U_x} \\
  &\bar{q}_0 \unitary H^{\tensor n}
\end{align*}

The map which assigns the unitary operator $\tilde{U}_f$ to a Boolean
function $f$ appears in a number of quantum algorithms.  For instance,
if $f(x_0) = 1$ for some $x_0 \in \Bin[n]$ and $f(x) = 0$ otherwise,
then $\tilde{U}_f$ is the ``black box oracle'' $O$ used to implement
Grover's search algorithm (see e.g. \cite[p.  254]{Nielsen00}).
Similarly, $\tilde{U}_f$ is used in the period-finding algorithm if $f$
is a periodic function.

The ability of quantum computation to superpose multiple evaluations of
a function $f$ in a single application of a unitary operator is often
referred to as quantum parallelism.  Considering the permutation matrix
$U_x$ as an evaluation of $f$ at $x$, the definition of $\tilde{U}_f$ as
the \texttt{case} statement in (\ref{eq:parallelcase}) shows that
quantum alternation embodies a form of quantum parallelism.
Furthermore, the fact that an application of $\tilde{U}_f$ is considered
a $O(1)$ operation is reflected in the syntactic representation of
alternation as a conditional construct.

Finally, an elementary but important observation is that the conditional
statement
\begin{equation*}
  \ifthenelse{q_0}{\qplskip}{q_1 \unitary e^{i \theta}}
\end{equation*}
implements a controlled phase. Since $\qplskip$ and
$q_1 \unitary e^{i \theta}$ are physically indistinguishable as quantum
operations, it follows that quantum alternation is not directly
physically realizable.  Rather, it represents a conceptual semantic
construct in a quantum programming language.  Furthermore, this example
shows that there is no structural semantics for quantum alternation
which is based on superoperators with extensional equality.

\section{Semantics}

In this section, we give a definition of quantum alternation for open
quantum systems and present a formal semantics for QPL with quantum
control. We only define alternation with respect to a single qubit
$q \tp \qbit$ and two branches. A formula for the general case can be
easily obtained using the same techniques.

Let $\cH$, $\cK$, and $\cL$ be Hilbert spaces.  A finite set $\cS$ of
nonzero bounded operators from $\cH$ to $\cK$ defines a superoperator
$T : S(\cH) \to S(\cK)$ by
\begin{equation}
  \label{eq:krauscondition}
  T(\rho)
  = \sum_{E \in \cS} E \rho E^\dagger 
  \qquad \text{if} \qquad \sum_{E \in \cS} E^\dagger E \le I.
\end{equation}
We will refer to $\cS$ as a \emph{decomposition of $T$} or, when the
superoperator is implicit, as a \emph{Kraus decomposition}.  A
well-known theorem of Kraus~\cite{Kraus83} states that every
superoperator has a decomposition, but this decomposition is never
unique.  Thus, two Kraus decompositions $\cS$ and $\cT$ are said to be
\emph{extensionally equal}, denoted $\cS \simeq \cT$, if the
corresponding superoperators are equal. The empty set $\nil$ corresponds
to the $0$ superoperator.

If $\cS \subset B(\cK, \cL)$ and $\cT \subset B(\cH, \cK)$ are Kraus
decompositions, their \emph{composition} $\cS \circ \cT$ is defined to
be the set obtained from the multiset
$\set{ E \circ F \mid E \in \cS, F \in \cT }$ by replacing $\ell$
occurences of a bounded operator $K$ with $\sqrt{\ell}K$ and removing
any occurrence of the zero operator. Each Hilbert space $\cH$ with
identity operator $I : \cH \to \cH$ determines a unique Kraus
decomposition $\id_\cH = \set{I}$ which acts as the identity for
composition.  Thus, we can define a category $\mathbf{C}$ with Hilbert
spaces $\cH, \cK$ as objects and Kraus decompositions
$\cS \subset B(\cH, \cK)$ as morphisms $\cS : \cH \to \cK$.  A statement
in QPL will be interpreted as a morphism in $\mathbf{C}$.

We define the \emph{quantum alternation} of two morphisms\footnote{This
  equation also appears in \cite{Ying14}.}
$\cS, \cT : \cH \to \cK$ to be the morphism
$\cS \bullet \cT : \qbit \tensor \cH \to \qbit \tensor \cK$ defined by
\begin{align*}
  \cS \bullet \cT
  &= \set*{\Pi_0 \tensor \frac{E}{\sqrt{|\cT|}}
    + \Pi_1 \tensor \frac{F}{\sqrt{|\cS|}} \mid E \in \cS, F \in \cT}.
\end{align*}
Here the projections $\Pi_0$ and $\Pi_1$ are determined by the qubit
$q \tp \qbit$ which is used in the alternation. It is easy to see that
$\cS \bullet \cT$ satisfies condition
(\ref{eq:krauscondition}). Moreover, if $\cS = \set{U_0}$ and
$\cT = \set{U_1}$ where $U_0$ and $U_1$ are unitary operators, then
$\cS \bullet \cT$ defines the same superoperator as $\Alt_q(U_0, U_1)$.
Indeed, the elements of $\cS \bullet \cT$ are of the form
$\Alt_q(\hat{E}, \hat{F})$ where
\begin{equation*}
  \hat{E}
  = \frac{E}{\sqrt{|\cT|}}, \quad
  \hat{F}
  = \frac{F}{\sqrt{|\cS|}}, \quad \text{for $E \in \cS$ and $F \in \cT$}.
\end{equation*}
Thus, $\cS \bullet \cT$ can be understood operationally as randomly
replacing a state $\rho$ with $K \rho K^\dagger/\tr(K \rho K^\dagger)$
with probability $\tr(K \rho K^\dagger)$ where $K$ is the ``pure''
quantum alternation $\Alt_q(\hat{E}, \hat{F})$.

We briefly recall the definition of the category $\mathbf{Q}$ associated
to the superoperator semantics of QPL. A \emph{signature} $\sigma$ is
defined to be a tuple of positive integers
$\sigma = (n_1, \ldots, n_s)$.  If $\sigma$ and $\tau$ are signatures,
then their concatenation $\sigma \oplus \tau$ and tensor product
$\sigma \tensor \tau$ are also signatures.  To each such $\sigma$, we
associate a complex vector space
\begin{equation*}
  V_\sigma = M(\CC, n_1) \times \ldots \times M(\CC, n_s),
\end{equation*}
where $M(\CC, k)$ denotes the vector space of $k \by k$ complex
matrices.  Clearly, $M(\CC, k) = B(\CC^k)$, so the elements of
$V_\sigma$ are tuples of bounded operators.  We define the trace of an
element in $V_\sigma$ to be the sum of the traces of its components and
say that an element of $V_\sigma$ is positive if all of its components
are positive operators.  Thus, a density operator in $V_\sigma$ is a
positive element with trace at most $1$.  The semantics of QPL, as
defined in~\cite{Selinger04}, is given by the category $\mathbf{Q}$
whose objects are signatures $\sigma, \tau$ and whose morphisms are
superoperators $T : V_\sigma \to V_\tau$.

A semantics for QPL with quantum control is obtained by replacing the
morphisms of $\mathbf{Q}$ with Kraus decompositions. The resulting
category is the category $\mathbf{C}$ defined above.  We assign to each
QPL primitive a Kraus decomposition and define the semantics of an
arbitrary program by structural induction.  Although the choice of Kraus
decomposition for a primitive may be arbitrary, we will rely on the fact
that the computational basis for $\qbit$ is the ``preferred'' basis and
give Kraus decompositions which are particularly simple to express using
$\ket{0}$ and $\ket{1}$.  For instance, let
$\inj_0, \inj_1 : \sigma \to \sigma \oplus \sigma$ be the injections
$\inj_0(\rho) = (\rho, 0)$ and $\inj_1(\rho) = (0, \rho)$.  We can then
define the semantics as follows.
\begin{alignat*}{3}
  &\semantics{P ; Q}
  &&\quad: \sigma \to \tau
  &&\quad= \semantics{Q} \circ \semantics{P} \\
  &\semantics{\qplskip}
  &&\quad: \sigma \to \sigma
  &&\quad= \set{\id} \\
  &\semantics{\newbit{b \define 0}}
  &&\quad: \sigma \to \sigma \oplus \sigma
  &&\quad= \set{\inj_0} \\
  &\semantics{\newqbit{q \define 0}}
  &&\quad: \sigma \to \qbit \tensor \sigma
  &&\quad= \set{\ket{0} \tensor -} \\
  &\semantics{\discard{q}}
  &&\quad: \qbit \tensor \sigma \to \sigma
  &&\quad= \set{\bra{0} \tensor \id,\, \bra{1} \tensor \id} \\
  &\semantics{\qplmerge}
  &&\quad: \sigma \oplus \sigma \to \sigma
  &&\quad= \set{\inj_0^\dagger, \inj_1^\dagger} \\
  &\semantics{\measure{q}}
  &&\quad: \sigma \to \sigma \oplus \sigma
  &&\quad= \set{\inj_0 \circ \Pi_0,\, \inj_1 \circ \Pi_1} \\
  &\semantics{q \unitary U}
  &&\quad: \sigma \to \sigma
  &&\quad= \set{U} \\
  &\semantics{\ifthenelse{q}{P}{Q}}
  &&\quad: \qbit \tensor \sigma \to \qbit \tensor \tau
  &&\quad= \semantics{P} \bullet \semantics{Q}
\end{alignat*}
The semantics defined above cannot be lifted to a semantics of
superoperators, because quantum alternation does not preserve
extensional equality. Indeed, the Kraus decompositions
$\set{U_0} \bullet \set{V_0}$ and $\set{U_1} \bullet \set{V_1}$ are
extensionally equal if and only if there exists a phase $\theta$ such
that $U_0 = e^{i \theta} U_1$ and $V_0 = e^{i \theta} V_1$, so
$\set{U_0} \bullet \set{V_0} \simeq \set{U_1} \bullet \set{V_1}$ may not
hold even if $\set{U_0} \simeq \set{U_1}$ and
$\set{V_0} \simeq \set{V_1}$. The failure of quantum alternation to
preserve extensional equality shows that there is no compositional
superoperator semantics which satisfies the definition of alternation
given in the introduction. However, as the examples above and previous
work~\cite{Altenkirch05}~\cite{Ying14} show, that particular definition
of quantum alternation for closed quantum systems is the most intuitive
and practical.

An important part of the superoperator semantics for QPL is the ability
to define recursion.  The category $\mathbf{Q}$ is
CPO-enriched~\cite{Selinger04}, a fact which together with the $\oplus$
operation makes $\mathbf{Q}$ a \emph{traced monoidal category}.  Since
each Kraus decomposition determines a unique superoperator, we can
define an order on the Hom-sets of $\mathbf{C}$ using the order on the
Hom-sets of $\mathbf{Q}$, viz.  $\cS \sqsubseteq \cT$ if the relation
holds for the corresponding superoperators. We can then try to adapt the
situation to quantum alternation.  But we have the following
proposition.
\begin{proposition*}
  Quantum alternation is not monotone with respect to the $\sqsubseteq$
  order.
\end{proposition*}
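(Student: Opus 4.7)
The plan is to leverage the phase ambiguity that already defeats preservation of extensional equality and turn it into an explicit failure of monotonicity. Since $\sqsubseteq$ on $\mathbf{C}$ is pulled back from the CP order on $\mathbf{Q}$, any two Kraus decompositions representing the same superoperator are automatically $\sqsubseteq$-comparable in both directions; it therefore suffices to exhibit inputs with coincident superoperators whose bullets differ as superoperators, because monotonicity would force the bullets to coincide as well.

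Concretely, I would take any Hilbert space $\cH$ and set $\cS_1 = \cS_2 = \set{I_\cH}$, $\cT_1 = \set{I_\cH}$, and $\cT_2 = \set{-I_\cH}$. All four decompositions implement the identity channel on $S(\cH)$, so in particular $\cS_1 \sqsubseteq \cS_2$ and $\cT_1 \sqsubseteq \cT_2$ both hold. Substituting into the definition of $\bullet$ gives
\[
  \cS_1 \bullet \cT_1 = \set{\Pi_0 \tensor I + \Pi_1 \tensor I} = \set{I_{\qbit \tensor \cH}}
  \quad\text{and}\quad
  \cS_2 \bullet \cT_2 = \set{\Pi_0 \tensor I - \Pi_1 \tensor I} = \set{Z \tensor I_\cH},
\]
where $Z = \Pi_0 - \Pi_1$ is the Pauli phase-flip; so the two alternations name two distinct unitary channels on $\qbit \tensor \cH$.

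To finish, I would argue by contradiction. If $\cS_1 \bullet \cT_1 \sqsubseteq \cS_2 \bullet \cT_2$, then the difference of the two superoperators is completely positive. Since both bullets are unitary channels, hence trace-preserving, this difference sends every density operator to a positive operator of trace zero, and must therefore be the zero map. That would force the two channels to coincide, contradicting their action on, for example, $\ketbra{+}{+} \tensor \rho_0$, whose images are $\ketbra{+}{+} \tensor \rho_0$ and $\ketbra{-}{-} \tensor \rho_0$ respectively. Hence $\cS_1 \bullet \cT_1 \not\sqsubseteq \cS_2 \bullet \cT_2$, and $\bullet$ fails to be monotone in its second argument. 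The only mildly delicate step is the ``trace-preserving plus CP-dominated implies equal'' observation, a one-line positivity argument, which is the easiest part of the proof to justify rigorously.
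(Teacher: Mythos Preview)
Your argument is correct, and it takes a genuinely different route from the paper's. The paper's proof compares $\cS \bullet \nil$ with $\cS \bullet \cT$ for $\cS = \set{U}$, $\cT = \set{V}$ unitary: since $\nil \sqsubseteq \cT$, monotonicity would require $\cS \bullet \nil \sqsubseteq \cS \bullet \cT$, but the difference applied to a block state $\rho$ has a zero diagonal block alongside a nonzero off-diagonal block $UBV^\dagger$, violating positivity. In other words, the paper exploits the \emph{bottom element} $\nil$ of the order and a direct block-matrix calculation.

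Your approach instead exploits the \emph{phase ambiguity} already flagged in the paper's discussion of extensional equality: you pick two extensionally equal decompositions $\set{I}$ and $\set{-I}$, so that the hypotheses of monotonicity hold in both directions, and then observe that the resulting bullets are distinct trace-preserving unitary channels, which can never be strictly $\sqsubseteq$-comparable by the trace-zero argument you give. This is clean, self-contained, and makes explicit the link between the failure of extensional well-definedness and the failure of monotonicity; the paper's version, by contrast, is more elementary at the level of matrix manipulation and shows that the problem already arises when one branch is the zero operation, which is perhaps the more ``order-theoretic'' failure mode one would first look for when testing compatibility with recursion.
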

\begin{proof}
  Let $\cH$ be the Hilbert space associated to a signature $\sigma$.
  Let $U$ and $V$ be two unitary operators on $\cH$ defining Kraus
  decompositions $\cS = \set{U}$ and $\cT = \set{V}$. Let $\rho$ be a
  state on $\qbit \tensor \cH$ defined by
  \begin{equation*}
    \rho =
    \begin{bmatrix}
      A & B \\
      C & D
    \end{bmatrix}
  \end{equation*}
  where $B \not= 0$. Then $\cS \sqsubseteq \cS$ and
  $\nil \sqsubseteq \cT$, but
  \begin{equation*}
    (\cS \bullet \cT - \cS \bullet \nil)(\rho) =
      \begin{bmatrix}
        0 & UBV^\dagger \\
        VCU^\dagger & VDV^\dagger
      \end{bmatrix}.
  \end{equation*}
  Recall that if a diagonal entry of a positive matrix is zero, then the
  corresponding row and column must be all zero.  Since
  $UBV^\dagger \not= 0$, it follows that
  $(\cS \bullet \cT - \cS \bullet \nil)(\rho)$ is not positive.
  Therefore, $\cS \bullet \nil \not\sqsubseteq \cS \bullet \cT$, but
  $\cS \sqsubseteq \cS$ and $\nil \sqsubseteq \cT$.
\end{proof}
This counter-example shows that quantum alternation is not compatible
with the semantics for recursion defined in~\cite{Selinger04}. Since a
CP map $T$ is a pure operation $\rho \mapsto E \rho E^\dagger$ if and
only if all operations completely dominated by it are its nonnegative
multiples~\cite{Raginsky03}, it appears that the reversibility condition
(III) makes quantum alternation fundamentally incompatible with the
standard order on CP maps.

Quantum operations admit several equivalent representations based on the
structure theory of CP maps~\cite{Raginsky03}. Each representation
illustrates a different aspect of the quantum operation. The rest of
this section defines quantum alternation in terms of Stinespring
representations. This alternative perspective will clarify the relation
between our definition of alternation and that of~\cite{Altenkirch05}.

Let $T : S(\cH) \to S(\cK)$ be a superoperator.  By Stinespring's
theorem, $T$ can be written as
$T(\rho) = V^\dagger(\rho \tensor I_\cA)V$, where $\cA$ is a Hilbert
space called the \emph{ancilla} and $V : \cK \to \cH \tensor \cA$ is a
bounded operator. The ancilla models the environment of the operation
$T$. The pair $(\cA, V)$ is called a \emph{Stinespring representation}
of $T$. Stinespring's theorem can be interpreted as saying that any
quantum operation $T$ can be implemented as a pure operation on a larger
Hilbert space. Given a Kraus decomposition $\cS$ defining a
superoperator $T : S(\cH) \to S(\cK)$, a Stinespring representation of
$T$ can be obtained from $\cS$ as follows. Let $\cA$ be a Hilbert space
with basis $\set{\ket{E}}_{E \in \cS}$ and define
$V : \cK \to \cH \tensor \cA$ by
\begin{equation*}
  V \psi = \sum_{E \in \cS} E^\dagger \psi \tensor \ket{E}.
\end{equation*}
Then $(\cA, V)$ is a Stinespring representation of $T$. Conversely, a
representation $(\cA, V)$ of $T$ with a fixed basis for $\cA$ determines
a Kraus decomposition of $T$.

If $\cS$ and $\cT$ are Kraus decompositions, then there is a natural
Stinespring representation for the superoperator determined by
$\cS \bullet \cT$, viz. the pair $(\cE, W)$ defined by
$\cE = \cA' \tensor \cA$ and
\begin{equation*}
  W \psi 
  = \sum_{E \in \cS, F \in \cT} \Alt_q(\hat{E}, \hat{F})^\dagger \psi \tensor \ket{F} \tensor \ket{E},
\end{equation*}
where $\cA$ and $\cA'$ are the ancillas of the Stinespring
representations determined by $\cS$ and $\cT$, respectively. Thus, the
environment of the quantum alternation is the tensor product of the
environments of the quantum operations involved.

\section{Related Work}
Altenkirch and Grattage~\cite{Altenkirch05} defined QML, a quantum
programming language with quantum control based on a new type of
judgement called ``orthogonality.'' The denotational semantics for QML
is based on expressing superoperators $T : S(\cA) \to S(\cB)$ in the
form $T(\rho) = \Tr_\cG U (\rho \tensor \ketbra{\xi}{\xi}) U^\dagger$,
where $\cH$ and $\cG$ are Hilbert spaces, $\xi \in \cH$ is a fixed unit
vector, and $U : \cA \tensor \cH \to \cB \tensor \cG$ is an
isometry. Defining the bounded operator $V : \cB \to \cA \tensor \cG$ by
$V \psi = U(\psi \tensor \xi)$, we obtain an equivalent Stinespring
representation $(\cG, V)$ of $T$. In QML, a \emph{strict} morphism
corresponds to a superoperator with $\dim \cG = 1$. Thus, strict
morphisms correspond to singleton Kraus decompositions in our semantics,
i.e. pure operations $\rho \mapsto E \rho E^\dagger$ with
$E^\dagger E \le I$. Only strict morphisms may be alternated in QML. The
alternation is further restricted by the orthogonality judgement, which
is implemented by an incomplete set of introduction rules.

The work of Mingsheng Ying et al.~\cite{Ying14} is very recent and
closely related to ours, though their attitude is quite different.  They
also note that the superoperator semantics is not compositional, but
they are content with this.  They do not define a Kraus semantics as we
do. However, our construction is essentially embedded inside their
definition of their superoperator semantics.  Perhaps, the right way to
look at it is that we have both defined a Kraus semantics but they have
gone on to give a superoperator semantics as an abstract interpretation
of the Kraus semantics.  In such a case it often happens that the
resulting semantics is not compositional.  The fact that quantum
alternation is not monotone using the L\"ower order is not noted by
them.  Ying has a different approach to recursion based on second
quantization~\cite{Ying14a} which seems to avoid the difficulties noted
here but we do not understand it well enough to comment on it here.
Certainly, combining recursion with quantum alternation will require
some radically new idea.

\section{Conclusion}

Superficially this may strike the reader as a very negative, or perhaps
schizophrenic, paper.  Certainly, we feel that quantum alternation as
often casually discussed, is quite problematic and some fix based on
type theory or syntactic control will not serve to make it meaningful.
On the other hand we see this as the start of some new directions.

Quantum alternation is not really physically meaningful.  Even if it is,
it seems incompatible with recursion.  Is there some crisp no-go theorem
here?  If so, what \emph{is} meaningful?  Ideally one should start from
physical systems and develop a structural understanding from which
linguistic entities should emerge.  It seems to us that quantum
alternation is a fantasy arising from programming language semantics
rather than from physics.  What we propose is that one should look
closely at, say, quantum optics where devices like Mach-Zehnder
interferometers~\cite{Garrison08} provide physical situations that are
reasonably viewed as alternation.  Note that in MZ interferometers the
system being split is the system on which the two alternate operations
are applied; there is not a distinct control qubit.

On a more mathematical note one can question the arbitrariness of the Kraus
semantics; different Kraus semantics correspond to the same operator so
doesn't that mean that the semantics is making unobservable distinctions?
However, this is not the case.  Different Kraus decompostions correspond to
different choices of measurement that an experimenter may choose to make.
In the standard paradigm, with classical control, the contexts provided by
the language do not make these differences visible but in the enriched
language they do.  

One can still ask whether there is a canonical decomposition one can
associate to a superoperator which can be used to define alternation.
Indeed there is and it involves more sophisticated mathematics; we
choose not to include it in this note.  There is an operator-algebra
analogue of the Radon-Nikodym theorem due to Belavkin~\cite{Belavkin86}
and, independently, Arverson~\cite{Arveson69}. Given two CP maps $S$ and
$T$ with $S \sqsubseteq T$, it gives a representation of $S$ in terms of
a chosen minimal Stinespring representation of $T$ and a positive
operator $\mathsf{D}_T(S)$, the Radon-Nykodim derivative of $S$ with
respect to $T$. Now there is a map, the tracial map, which can be proven
to dominate any CP map from $\cB(\cH)$ to $\cB(\cK)$.  This gives a
canonical decomposition of an arbitrary CP map; we have worked out a
denotational semantics of the language with quantum alternation based on
this approach.  The trouble, and the reason we have not included it
here, is that the physical significance of this semantics is unclear to
us.

\section*{Acknowledgements}
Panangaden would like to thank Mingsheng Ying for discussions allowing us
to understand the relationship between our semantics for quantum
alternation.  He would also like to thank Vincent Danos who was present at
the discussion and made several insightful remarks sprinkled with some
interesting non sequiturs.  We thank the referees for their comments.  We
have both been supported by NSERC.  B\u{a}descu has also been supported by
a scholarship by FQRNT.  Panangaden acknowledges the generous support of
the Chinese Academy of Sciences, Institute of Mathematics, during his stay
in Beijing.

\bibliographystyle{eptcs}
\bibliography{../main}
\end{document}